\newtheorem{theorem}{Theorem}
\begin{document}

\preprint{APS/123-QED}

\title{On the Inertial Rotational Brownian Motion of Arbitrarily Shaped Particles}

\author{Amitesh S. Jayaraman}
\email{amiteshs@stanford.edu}
 \affiliation{Department of Mechanical Engineering, Stanford University, CA 94305, USA}
\author{Jikai Ye}%
\email{jikai.ye@nus.edu.sg}
\affiliation{%
Department of Mechanical Engineering, National University of Singapore, 117575, Singapore
}%
\author{Gregory S. Chirikjian}%
\email{mpegre@nus.edu.sg}
\affiliation{%
Department of Mechanical Engineering, National University of Singapore, 117575, Singapore
}%

\date{\today}

\begin{abstract}
This article reports the modeling of inertial rotational Brownian motion as an Ornstein-Uhlenbeck process evolving on the cotangent bundle of the rotation group, $SO(3)$. The benefit of this approach and the use of a different parameterization of rotations allows the handling of particles with arbitrary shapes, without requiring any simplifying assumptions on the shape or the structure of the viscosity tensors. The resultant Fokker-Planck equation for the joint orientation and angular momentum probability distribution can be solved approximately using an `ansatz' Gaussian distribution in exponential coordinates.

\end{abstract}


\maketitle


\section{\label{sec:Introduction}Introduction}

The study of rotational Brownian motion has more than 100 years of history, commencing with Debye's pioneering work \cite{Debye1913} in 1913. Rotational Brownian motion, as opposed to translational Brownian motion, is the phenomenon observed when a three-dimensional body is subject to random torques---giving rise to a random trajectory on the group of rigid body rotations, $SO(3)$. Despite the fact that research in this area commenced more than a century ago, rotational Brownian motion continues to be relevant in a variety of fields beyond soft matter and colloidal science, such as attitude estimation \cite{solo2010attitude} and even exotic applications in astrophysics \cite{merritt2002}. Debye had considered non-inertial rotational Brownian motion and Perrin \cite{Perrin1928} furthered this analysis by casting non-inertial rotational Brownian motion as a stochastic process on the rotation group $SO(3)$.

Obtaining the joint orientation and angular velocity/momentum probability density for the inertial theory of rotational Brownian motion is a more challenging problem for arbitrary body geometries due to the presence of the nonlinear cross-product term in Euler's equations for rigid body motions. Many researchers have previously considered various approximations of the problem. Hubbard \cite{hubbard1972rotational} for instance analyzed the rotational Brownian motion of spherical particles as dynamics evolving on the state space, $SO(3)\times so(3)$. However, his analysis is approximate as well and he performed an iterative solution of the resulting Fokker-Planck equations rather than a direct exact solution. Fixman \cite{fixman1969angular} considered the dynamics over phase space, $SO(3) \times so^{*}(3)$, but his analysis is restricted to symmetric top-shaped particles. Finally, Coffey \cite{coffey2002langevin} considered Brownian motion in the phase space, i.e., $SO(3)\times so^{*}(3)$ for needle-shaped particles but does not solve the Fokker-Planck equations directly. 

There has also been some research in the context of obtaining marginalized versions of the joint probability densities, \textit{i}.\textit{e}., Hubbard \cite{hubbard1977angular} has also considered the probability distribution over angular velocity alone for arbitrarily shaped particles---but the results are most accurate for small deviations from sphericity. Steele on the other hand \cite{steele1963molecular} considers the probability distribution over orientation alone, but focuses primarily on a spherical particle. Ivanov \cite{ivanov1964theory} considers the probability distribution over orientation for arbitrarily shaped particles, with the Brownian motion process modeled as a random walk. 
Finally, Ford et. al., \cite{ford1979rotational} derive orientational and angular velocity autocorrelation functions for asymmetric tops but assume that the viscous tensor and the inertia tensor are simultaneously diagonalizable; moreover, their expressions are derived by a perturbative expansion and are thus approximate. 
Moreover, they do not attempt to directly solve for the probability density function.

This paper considers the inertial rotational Brownian motion problem for an arbitrarily shaped three-dimensional rigid body, with no constraints on the structure of the viscosity tensor or moment of inertia. 
We reformulate the inertial rotational Brownian motion as a stochastic process on the cotangent bundle group of $SO(3)$.
A small-time closed-form expression is derived for the probability density function (PDF) over orientation and angular momentum for the particle. 
The parameters of the PDF are determined by an integration, which has an analytic expression in some cases and can be solved numerically in general.

\section{\label{sec:GoverningEquations} Governing Equations}

\subsection{Equations of Motion}
Consider a three-dimensional rigid body with the moment of inertia tensor $I$ (in a body frame of reference). It is suspended in a fluid medium and at any point of time, its orientation is encoded by a three-dimensional rotation matrix, $R\in SO(3)$. The body frame angular velocity of the body is $\bm{\omega}_R\in\mathbb{R}^3$ and defined such that $\bm{\omega}_R = (R^T\dot{R})^\vee$ where $\vee$ is a bijection converting elements from the Lie algebra of $SO(3)$ to $\mathbb{R}^3$ (in other words, vectorizes $3\times 3$ skew-symmetric matrices). The angular momentum of the body is then $\bm{\ell} = I\bm{\omega}_R$.

This body is subject to both a viscous torque that is proportional to the angular velocity of the body, \textit{i}.\textit{e}. of the form $C\bm{\omega}_R$, as well as a random torque (in the context of Brownian motion, this arises from molecular collisions) $\bm{\eta}$. The equation of motion is then the familiar Euler equation for angular momentum with random torque:
\begin{equation*}
    (\dot{\boldsymbol{\ell}} + I^{-1}\boldsymbol{\ell}\times\boldsymbol{\ell}) = -CI^{-1}\boldsymbol{\ell} + \bm{\eta}.
\end{equation*}
Writing this as a stochastic differential equation by noting that $\bm{\eta}\;dt = B\;d\bm{W}$, where $d\bm{W}$ is a Wiener process increment with zero mean and variance $dt$, we have,
\begin{equation}\label{eq:AngMomentSDE}
    (\dot{\boldsymbol{\ell}} + I^{-1}\boldsymbol{\ell}\times\boldsymbol{\ell})\;dt = -CI^{-1}\boldsymbol{\ell}\;dt + B\;d\bm{W}.
\end{equation}
Note that $B$ controls the variance and `color' of the noise term. 
In the non-inertial theory, the cross-product term $I^{-1}\boldsymbol{\ell}\times \boldsymbol{\ell}$ is assumed to be much smaller than the viscous torque $CI^{-1}\boldsymbol{\ell}$.
This is true when the viscosity of the fluid is relatively large and the particles are small.
We explain in Appendix I that these two terms can be of a comparable magnitude when the fluid is not viscous enough, \textit{e}.\textit{g}., in the air, where the inertial theory should apply.

In addition, since we are interested in obtaining the probability distribution over phase space $(R,\bm{\ell})$ we also have the following definition,
\begin{equation}\label{eq:RotSDE}
    (R^T\dot{R})^\vee dt = I^{-1}\bm{\ell}\;dt.
\end{equation}
Simultaneously solving (\ref{eq:AngMomentSDE}) and (\ref{eq:RotSDE}) we obtain the joint distribution $f(R,\boldsymbol{\ell};t)$ on phase space. In the following section, we describe the geometry of this phase space.

\subsection{Geometry of Phase Space}
From Hamiltonian dynamics, the phase space can be represented by the set of all orientation and angular momentum states that the body can occupy. We imbue an additional semi-direct product structure to this space so that we have a group $SO(3)\ltimes\mathbb{R}^3$ as the phase space on which the system evolves. From \cite{jayaraman2020black} we see that this is in fact the cotangent bundle group of $SO(3)$. A group element, $h(R,\bm{\ell}) \in SO(3)\ltimes\mathbb{R}^3$ is constructed as,
\begin{equation}\label{eq:h_def}
    h(R,\bm{\ell}) = \left(
    \begin{array}{c|c}
    R^T & \bm{\ell}\\
    \hline
    \bm{0}^T & 1
    \end{array}\right).
\end{equation}
Note that we parameterize rotations with $R^T$ instead of $R$ (nevertheless, this group is simply a different parameterization of $SE(3)$).

The Lie algebra of $SO(3)\ltimes\mathbb{R}^3$ is six-dimensional, and can be expressed using the same orthonormal basis as $se(3)$:
\begin{equation}\nonumber
    \tilde{E}_1 =
    \begin{pmatrix}
    0 && 0 && 0 && 0\\
    0 && 0 && -1 && 0\\
    0 && 1 && 0 && 0\\
    0 && 0 && 0 && 0
    \end{pmatrix},
    \;\;
    \tilde{E}_2 =
    \begin{pmatrix}
    0 && 0 && 1 && 0\\
    0 && 0 && 0 && 0\\
    -1 && 0 && 0 && 0\\
    0 && 0 && 0 && 0
    \end{pmatrix},
\end{equation}
\begin{equation}\nonumber
    \tilde{E}_3 =
    \begin{pmatrix}
    0 && -1 && 0 && 0\\
    1 && 0 && 0 && 0\\
    0 && 0 && 0 && 0\\
    0 && 0 && 0 && 0
    \end{pmatrix},
    \;\;
    \tilde{E}_4 =
    \begin{pmatrix}
    0 && 0 && 0 && 1\\
    0 && 0 && 0 && 0\\
    0 && 0 && 0 && 0\\
    0 && 0 && 0 && 0
    \end{pmatrix},
\end{equation}
\begin{equation}\nonumber
    \tilde{E}_5 =
    \begin{pmatrix}
    0 && 0 && 0 && 0\\
    0 && 0 && 0 && 1\\
    0 && 0 && 0 && 0\\
    0 && 0 && 0 && 0
    \end{pmatrix}
    \;\;\text{and}\;\;
    \tilde{E}_6 =
    \begin{pmatrix}
    0 && 0 && 0 && 0\\
    0 && 0 && 0 && 0\\
    0 && 0 && 0 && 1\\
    0 && 0 && 0 && 0
    \end{pmatrix}.
\end{equation}
The adjoint matrix of the group is $[Ad(h)]$ and is given by,
\begin{equation}
[Ad(h)] = 
    \begin{pmatrix}
    R^T && \mathbb{O}\\
    LR^T && R^T
    \end{pmatrix},
\end{equation}
where $L^\vee$ = $\boldsymbol{\ell}$. Moreover, the ``little ad" operator is given by $[ad(X)]$ where $X$ is in the Lie algebra of $SO(3)\ltimes\mathbb{R}^3$. If,
\begin{equation}
    X = 
    \begin{pmatrix}
    \Omega && \boldsymbol{v}\\
    \boldsymbol{0}^T && 0
    \end{pmatrix},
\end{equation}
then for $V^\vee = \boldsymbol{v}$,
\begin{equation}
    [ad(X)] = 
    \begin{pmatrix}
    \Omega && \mathbb{O}\\
    V && \Omega
    \end{pmatrix}.
\end{equation}
The right and left Jacobians, $\mathcal{J}_L$ and $\mathcal{J}_R$ can be calculated as,
\begin{equation}
    \mathcal{J}_r = 
    \begin{pmatrix}
    J_r(R^T) && \mathbb{O}\\
    \mathbb{O} && R
    \end{pmatrix}
    \;\;\text{and}\;\;
    \mathcal{J}_l = 
    \begin{pmatrix}
    J_l(R^T) && \mathbb{O}\\
    LJ_l(R^T) && \mathbb{I}
    \end{pmatrix},
\end{equation}
where $J_r$ and $J_l$ are the right and left Jacobians for $SO(3)$ given as,
\begin{equation}\nonumber
    J_r(R(\boldsymbol{q})) = \left[\left(R^T\frac{\partial R}{\partial q_1}\right)^\vee,\cdots,\left(R^T\frac{\partial R}{\partial q_3}\right)^\vee\right],
\end{equation}
\begin{equation}
    J_l(R(\boldsymbol{q})) = \left[\left(\frac{\partial R}{\partial q_1} R^T\right)^\vee,\cdots,\left(\frac{\partial R}{\partial q_3}R^T\right)^\vee\right].
\end{equation}
Using the definitions for the left and right Jacobians, we can construct the left and right Lie derivatives for this group (where the summation is implied over repeated indices),
\begin{align}\label{eq:ModSE(3)_right}
\tilde{E}^R_i u(h(\boldsymbol{q},\boldsymbol{\ell})) &= 
\begin{cases}
E^L_i u(h(\bm{q},\bm{\ell}))\; \; \text{$(i, j = 1, 2, 3)$}\\\\
R_{(i-3),j} \partial \tilde{u}/\partial \ell_j \; \; \text{$(j = 1, 2, 3$; $i = 4, 5, 6)$}
\end{cases}\\\label{eq:ModSE(3)_left}
\tilde{E}^L_i u(h(\boldsymbol{q},\boldsymbol{\ell})) &= 
\begin{cases}
E^R_i u(h(\bm{q},\bm{\ell})) - \varepsilon_{ijk}\ell_j\partial \tilde{u}/\partial l_k\;\; \text{$(i,j,k = 1, 2, 3)$}
\\\\
-\partial \tilde{u}/\partial \ell_{i-3} \; \; \text{$(i = 4, 5, 6)$}
\end{cases}
\end{align}
where $h$ is parameterized by $\boldsymbol{q}$ for the rotation and $\boldsymbol{\ell}$ for angular momentum and $u(h(\boldsymbol{q},\boldsymbol{\ell})) = \tilde{u}(\boldsymbol{q},\boldsymbol{\ell})$. Also, $\varepsilon_{ijk}$ is the Levi-Civita symbol. We use the formulae, $\boldsymbol{E}^R f = [J_r]^{-T}\partial u/\partial\boldsymbol{q}$ and $\boldsymbol{E}^L f = -[J_l]^{-T}\partial u/\partial\boldsymbol{q}$, and, $\boldsymbol{\tilde{E}}^R f = [\mathcal{J}_R]^{-T}\partial u/\partial\boldsymbol{\xi}$ and $\boldsymbol{\tilde{E}}^L f = -[\mathcal{J}_L]^{-T}\partial u/\partial\boldsymbol{\xi}$ where $\boldsymbol{\xi} = [\boldsymbol{q}^T,\boldsymbol{\ell}^T]^T$.

\subsection{Ornstein-Uhlenbeck Process in $SO(3)\ltimes\mathbb{R}^3$}

We observe that the left-hand side of the equations of motion (\ref{eq:AngMomentSDE},\ref{eq:RotSDE}) can be written as a component of $(\dot{h}h^{-1})^\vee dt$ where $h$ is given in (\ref{eq:h_def}). Since,
\begin{equation}
    (\dot{h}h^{-1})^\vee =
    \begin{pmatrix}
    \dot{R}^TR && -\dot{R}^TR\boldsymbol{\ell} + \dot{\boldsymbol{\ell}}\\
    \boldsymbol{0}^T && 0
    \end{pmatrix}^\vee.
\end{equation}
and $(\dot{R}^TR)^\vee = -(R^T\dot{R})^\vee = -\boldsymbol{\omega}_R$. This also implies that $-\dot{R}^TR\boldsymbol{\ell} = \bm{\omega}_R\times\bm{\ell}$ and we have,
\begin{equation}
    (\dot{h}h^{-1})^\vee = 
    \begin{pmatrix}
    -\boldsymbol{\omega}_R\\
    \dot{\boldsymbol{\ell}} + \boldsymbol{\omega}_R\times\boldsymbol{\ell}
    \end{pmatrix}.
\end{equation}
Comparing with (\ref{eq:AngMomentSDE},\ref{eq:RotSDE}) we see that,
\begin{equation}\label{eq:mainSDE}
    (\dot{h}h^{-1})^\vee dt = -
    \begin{pmatrix}
    I^{-1}\boldsymbol{\ell}\\
    CI^{-1}\boldsymbol{\ell}
    \end{pmatrix}\; dt +
    \begin{pmatrix}
    \mathbb{O} && \mathbb{O}\\
    \mathbb{O} && B
    \end{pmatrix}\circledS\;d\boldsymbol{w},
\end{equation}
which is a (left) stochastic differential equation evolving on $SO(3)\ltimes\mathbb{R}^3$; the $\circledS$ makes it explicit that this is a Stratonovich equation. Notably, the troublesome cross-product term is absorbed into the definition of $h$, therefore, avoiding the need to deal with that term explicitly. Setting 
\begin{equation}
\boldsymbol{m} = 
\begin{pmatrix}
I^{-1}\boldsymbol{\ell}\\
CI^{-1}\boldsymbol{\ell}
\end{pmatrix}
\;\;\text{and}\;\;
\tilde{B} = 
\begin{pmatrix}
\mathbb{O} && \mathbb{O}\\
\mathbb{O} && B
\end{pmatrix},
\end{equation}
the Fokker-Planck equation for the probability density function $u(h,t)$ on $SO(3)^T\ltimes\mathbb{R}^3$ is,
\begin{equation}\label{eq:FPE_OU_Group}
    \frac{\partial u}{\partial t} =- \sum_{i = 1}^6 E^L_i(m_i u) + \frac{1}{2}\sum_{i,j = 1}^6 (\tilde{B}\tilde{B}^T)_{ij}E^L_iE^L_j u.
\end{equation}
The procedure to go from a stochastic differential equation evolving on a group to a corresponding Fokker-Planck equation is provided in detail in \cite{chirikjian2011stochastic}. In coordinates $\boldsymbol{\xi} = [\boldsymbol{q}^T,\boldsymbol{\ell}^T]^T$, we express the equation in terms of $\tilde{u}(\boldsymbol{\xi},t) = u(h(\boldsymbol{\xi}),t)$ as,
\begin{equation}\label{eq:FPE_OU_Group_Coord}
    \frac{\partial \tilde{u}}{\partial t} = -\sum_{i,j = 1}^6 Q_{ij}\tilde{E}^L_i(\xi_j \tilde{u}) + \frac{1}{2}\sum_{i,j = 1}^6 (\tilde{B}\tilde{B}^T)_{ij}\tilde{E}^L_i\tilde{E}^L_j \tilde{u},
\end{equation}
where $Q$ is,
\begin{equation}
    Q = 
    \begin{pmatrix}
    \mathbb{O} && I^{-1}\\
    \mathbb{O} && CI^{-1}
    \end{pmatrix},
\end{equation}
therefore describing a degenerate Ornstein-Uhlenbeck process. The matrix $Q$ has a maximum rank of 3, and therefore only 3 non-zero singular values.

\subsection{Maxwell-Boltzmann Limit}\label{sec:MaxwellBoltzmann}
The limiting distribution at $t\rightarrow \infty$ is a Maxwell-Boltzmann distribution for angular momentum and a uniform distribution over orientation, \textit{i}.\textit{e}.,
\begin{equation}\label{eq:LimitingSol}
    u_{\infty}(R,\boldsymbol{\ell}) = \frac{1}{Z}\exp\left(-\frac{1}{2}\beta \boldsymbol{\ell}^TI^{-1}\boldsymbol{\ell}\right),
\end{equation}
which is known from thermodynamics. Here, $\beta = 1/k_BT$ where $k_B$ is the Boltzmann constant, and $Z$ is the partition function to normalize the distribution over phase space. To ensure that this limiting distribution can be obtained we also need,
\begin{equation}\label{eq:Fluct_Diss}
    C+C^T = \beta BB^T,
\end{equation}
by the fluctuation-dissipation theorem. 
A proof using a non-parametric approach is provided in Appendix II.
For usual physical systems, this reduces to,
\begin{equation}\label{eq:Fluct_Diss_Phy}
2C = \beta BB^T,
\end{equation}
because $C$ is a symmetric matrix as per Onsager's reciprocity theorem \cite{onsager1931reciprocal,coleman1960reciprocal}. 
Note that a similar relationship between a diffusion matrix and viscous matrix is derived in \cite{brenner1967coupling}.
However, their diffusion matrix $D_0$ has a different physical meaning from the $BB^T$ we are using here.

\section{Solution Methodology and Results}
To approximately solve the Fokker-Planck equation in (\ref{eq:FPE_OU_Group_Coord}) we adapt the solution provided by Riksen in \cite{risken1996fokker}. We consider the initial conditions of $\tilde{u}(h,0) = \delta_G(h)$ where $\delta_G(h)$ is the group Dirac delta function concentrated at the identity element (this implies that the particle has zero angular velocity at $t = 0$ and all orientations are defined with respect to the orientation at $t = 0$). If the probability distribution is sufficiently concentrated we can make the following approximation,
\begin{equation} \label{eq:ori_assump}
    \tilde{E}^L_i \approx - \frac{\partial}{\partial \xi_i},
\end{equation}
so that we have $u(h,t) = \tilde{u}(\bm{\xi},t)$ for sufficiently small time $t$,
\begin{equation}\label{eq:FPE_Approximate_Xi}
    \frac{\partial \tilde{u}}{\partial t} = \sum_{i,j = 1}^6 Q_{ij}\frac{\partial}{\partial \xi_i}(\xi_j \tilde{u}) + \frac{1}{2}\sum_{i,j = 1}^6 (\tilde{B}\tilde{B}^T)_{ij}\frac{\partial^2\tilde{u}}{\partial\xi_i\partial\xi_j},
\end{equation}
which solves to yield a Gaussian with mean $\bm{\mu}(t) = \bm{0}$ and covariance $\Sigma(t)$ that varies as,
\begin{equation}\label{eq:Sigma_Eq}
    \Sigma(t) = \int_0^t [e^{-Q\tau}]\tilde{B}\tilde{B}^T[e^{-Q\tau}]^T\;d\tau.
\end{equation}
Assuming that this is the solution on the Lie algebra we can construct the solution on the group as,
\begin{equation}\label{eq:Solution}
u(h,t) \!=\! \frac{1}{(2\pi)^3|\text{det}\;\Sigma(t)|^{\frac{1}{2}}}\exp\!\left(\!-\frac{1}{2}[\log^\vee h]^T \Sigma^{-1}(t)[\log^\vee h]\!\right),
\end{equation}
where $\log$ is the matrix logarithm. Hence, we obtain a closed-form approximate solution for the joint probability distribution evolving in phase space. We also note that the group theoretic mean and covariance are defined as,
\begin{equation}\label{eq:GroupMu}
\int_G \log^\vee(h \circ \mu^{-1})\;u(h,t)\;dh \;\doteq\; \boldsymbol{0},
\end{equation}
and,
\begin{equation}\label{eq:GroupCov}
\Sigma(t) \;\doteq\; \int_G [\log^\vee(h \circ \mu^{-1})][\log^\vee(h \circ \mu^{-1})]^T\;u(h,t)\;dh.
\end{equation}


\subsection{Example: Diagonal and Isotropic Viscous Tensor}
Consider a body fixed frame that is oriented along the eigenvectors of the moment of inertia tensor $I$ so that we have a diagonal moment of inertia tensor, $I = \text{diag}(I_1,I_2,I_3)$. Then assume that the viscous tensor is of the form $C = c\mathbb{I}$ (i.e., diagonal and isotropic). The $\tilde{B}$ is also given as,
\begin{equation}
    \tilde{B} = 
    \begin{pmatrix}
    \mathbb{O} && \mathbb{O}\\
    \mathbb{O} && b\mathbb{I}
    \end{pmatrix},
\end{equation}
where $2c = \beta b^2$ from (\ref{eq:Fluct_Diss}). Then, 
\begin{equation}
    Q = \begin{pmatrix}
    \mathbb{O} && \text{diag}(1/I_1,1/I_2,1/I_3)\\
    \mathbb{O} && \text{diag}(c/I_1,c/I_2,c/I_3)
    \end{pmatrix},
\end{equation}
so that,
\begin{equation}\label{eq:CovSol}
    \Sigma(t) = \int_{0}^t [e^{-Q\tau}]\tilde{B}\tilde{B}^T[e^{-Q\tau}]^T\;d\tau\\
    = \begin{pmatrix}
    \sigma_{11}(t) && \sigma_{12}(t)\\
    \sigma_{21}(t) && \sigma_{22}(t)
    \end{pmatrix},
\end{equation}
where,
\begin{align*}
    \sigma_{11}(t) &= b^2\text{diag}(\sigma_{11}(I_1;t),\sigma_{11}(I_2;t),\sigma_{11}(I_3;t))\\
    \sigma_{12}(t) &= b^2\text{diag}(\sigma_{12}(I_1;t),\sigma_{12}(I_2;t),\sigma_{12}(I_3;t))\\
    \sigma_{21}(t) &= b^2\text{diag}(\sigma_{21}(I_1;t),\sigma_{21}(I_2;t),\sigma_{21}(I_3;t))\\
    \sigma_{22}(t) &= b^2\text{diag}(\sigma_{22}(I_1;t),\sigma_{22}(I_2;t),\sigma_{22}(I_3;t)),
\end{align*}
such that,
\begin{align*}\label{eq:Sigma_R}
    &\sigma_{11}(I_i;t) = \frac{b^2}{c^2}\left(t - \frac{3I_i}{2c}- \frac{I_i}{2c}(e^{-2ct/I_i} - 4e^{-ct/I_i})\right)\\
    &\sigma_{12}(I_i;t) = \sigma_{21}(I_i;t) = \frac{b^2}{c}\left(\frac{I_i}{c}e^{-ct/I_i} - \frac{I_i}{2c}e^{-2ct/I_i} - \frac{I_i}{2c}\right)\\
    &\sigma_{22}(I_i;t) = -\frac{I_i b^2}{2c}(e^{-2ct/I_i}-1).
\end{align*}
We see that when we set $I_i = 0$ for $i = 1, 2, 3$, we obtain $\sigma_{11}(t) = b^2 t/c^2$ and $\sigma_{12}(t) = \sigma_{21}(t) = \sigma_{22}(t) = 0$. In fact, this is the solution obtained from a non-inertial theory as will be explored in the following section.
\subsubsection{Comparison against non-inertial theory}
In the non-inertial theory of rotational Brownian motion, the governing stochastic differential equation is of the form,
\begin{align}\nonumber
    C (R^T\dot{R})^\vee dt &= Bd\boldsymbol{W}.
\end{align}
This yields the following Fokker-Planck equation for $u(R,t)$,
\begin{equation}\label{eq:noninertial_FPE}
    \frac{\partial u}{\partial t} =  \frac{1}{2}\sum_{i,j=1}^3 D_{ij}E^R_iE^R_j u,
\end{equation}
for $D = C^{-1}BB^TC^{-T}$, which is a drift-less diffusion on $SO(3)$. If we consider the solution at small times (which can be approximated via covariance propagation \cite{chirikjian2016harmonic}) and an initial condition of zero initial angular velocity, we obtain the following approximate solution:
\begin{align}\nonumber
    u(R,t) &= \frac{1}{(2\pi)^{3/2}|Dt|^\frac{1}{2}}\exp\left(-\frac{1}{2t}[\log^\vee R]^TD^{-1}[\log^\vee R]\right).
\end{align}
Noting that the covariance terms corresponding to the three-dimensional vector $\log^\vee R$ is $\sigma_{11}$ from (\ref{eq:CovSol}), we proceed to make a comparison with the results from the theory of inertial rotational Brownian motion. 
Now, if $B = b\mathbb{I}$ is the identity matrix and $C = c\mathbb{I}$, this gives $\Sigma = Dt = b^2t/c^2\mathbb{I}$. However, by setting $I_1 = I_2 = I_3 = 0$ in (\ref{eq:CovSol}), we obtain the same form for the orientational covariance $\sigma_{11}(t)$ from (\ref{eq:CovSol}).

\subsubsection{Spherical particles}
As a special case, we consider a spherical particle, where $I_1 = I_2 = I_3$, to demonstrate the underlying ideas.

As ground truth, we compare against the covariance (the mean is always identity) obtained from an It\^{o}-Gangolli integration of the governing stochastic differential equation in (\ref{eq:mainSDE}). That is, we evolve an ensemble of $i = 1,\cdots,N$ particles on the group with time as follows with the following update over a time step $dt$:
\begin{equation}\label{eq:ItoGangolliInjection}
  h_i(t + dt) = \exp\left(dX\right)\circ h_i(t),
\end{equation}
where $dX^\vee$ is the right-hand side of (\ref{eq:mainSDE}). A probability distribution can be constructed from these sample points as,
\begin{equation}
    u_s(h,t) = \frac{1}{N}\sum_{i=1}^N\delta\left(h\circ h_i^{-1}(t)\right),
\end{equation}
and the sample covariance is,
\begin{equation}
    \Sigma_s(t) = \frac{1}{N}\sum_{i=1}^N[\log^\vee h_i(t)][\log^\vee h_i(t)]^T.
\end{equation}
The above assumes that the sample mean is identity at all times. This is seen as a reasonable approximation since numerically, the sample mean deviates from the identity with only a 3\% error (measured by a relative Frobenius norm) for $N = 30,000$ particles. 

In what follows, we assume $N = 30,000$ particles and compare the covariance obtained from the Ornstein-Uhlenbeck solution in (\ref{eq:CovSol}) with the sampled covariance for $t = 3$ units (and time step of $dt = 1\times 10^{-3}$ units). Additionally, $I = C = \mathbb{I}$ and $\beta = 0.2$ (the fluctuation-dissipation theorem sets the value of $b = \sqrt{2/\beta}$). Furthermore, we decompose the $6\times6$ covariance matrix into blocks of $3\times3$ matrices as,
\begin{equation}
    \Sigma = 
    \begin{pmatrix}
    \Sigma_{RR} && \Sigma_{R\ell}\\
    \Sigma_{R\ell}^T && \Sigma_{\ell\ell}
    \end{pmatrix},
\end{equation}
where $\Sigma_{RR}$ and $\Sigma_{\ell\ell}$ are related to the covariance of the distribution once the distribution is marginalised over angular momenta and orientation respectively; $\Sigma_{R\ell}$ is the covariance for the coupling of the two variables. 

The error between the covariance matrix obtained from the Ornstein-Uhlenbeck solution in (\ref{eq:CovSol}) and the sample covariance matrix is computed as,
\begin{equation}
    \text{Error}(t) = \frac{||\Sigma(t) - \Sigma_s(t)||_F}{||\Sigma_s(t)||_F},
\end{equation}
where $||\cdots||_F$ is the Frobenius norm. The evolution of this relative error with time is plotted in Figure \ref{fig:FrobeniusErrorSigma}. Notably, the error is small (order of 1\%) till $t\approx 0.6$ units after which the error grows monotonically, and the approximation is no longer valid. 
\begin{figure}[!ht]
\includegraphics[width = 0.47\textwidth]{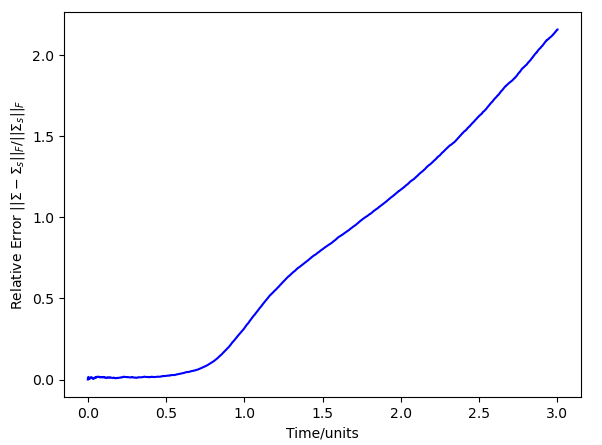}
\caption{\label{fig:FrobeniusErrorSigma} The evolution of the relative error with time for the Ornstein-Uhlenbeck solution, compared against the sample covariance.}
\end{figure}

Since for a sphere, all three covariance matrices can be expected to be multiples of $\mathbb{I}$ (from (\ref{eq:CovSol})), we now seek to compare the time-evolution for (co)variance $\Sigma_{RR}$, $\Sigma_{R\ell}$ and $\Sigma_{\ell\ell}$; that is, for a covariance $\Sigma = k\mathbb{I}$, we compare in terms of the scalar $k$.
For the sample covariance, we use the average of diagonal elements as $k$.
\paragraph{Covariance for orientations} If the probability density for orientation for a non-inertial rotational Brownian motion is solved by covariance propagation, we obtain $\Sigma_{RR} = tD\mathbb{I}$ where $D = b^2/c^2$. 
We also plot this solution along with that from the Ornstein-Uhlenbeck approximation (for inertial theory). Finally, from Section \ref{sec:MaxwellBoltzmann} we know that the limiting orientational distribution is a uniform distribution in $SO(3)$, i.e., $u_\infty(R) = 1/(8\pi^2)$ so that we have,
\begin{equation*}
    \int_{SO(3)}\frac{1}{8\pi^2}\;dg = \frac{1}{8\pi^2}\int_0^{2\pi}\int_0^\pi\int_0^{2\pi}\sin\beta\;d\alpha d\beta d\gamma = 1,
\end{equation*}
where the integral over $SO(3)$ is decomposed to an integral over the $ZYZ$ Euler angles $(\alpha,\beta,\gamma)$. The limiting covariance is then,
\begin{equation*}
    \Sigma_{RR,\infty} = \frac{1}{8\pi^2}\int_G[\log^\vee g][\log^\vee g]^T\;dg.
\end{equation*}
Now we parameterize $g = g(\boldsymbol{q})$ in terms of exponential coordinates, i.e., $g = \exp(\boldsymbol{q}^\wedge)$. In this parameterisation, $$dg = \frac{2(1 - \cos||\boldsymbol{q}||)}{||\boldsymbol{q}||^2}\;d\boldsymbol{q}.$$ Since $\boldsymbol{q}\in\mathbb{R}^3$, we imagine decomposing the integral into an integration over a solid ball of radius $r = ||\boldsymbol{q}||$ where $0\leq r < \pi$. Then, letting the polar angle be $\theta$, for $0\leq \theta < \pi$, and the azimuthal angle be $\phi$, for $0\leq \phi < 2\pi$, we can write, $\boldsymbol{q} = [r\cos\phi\sin\theta,r\sin\phi\sin\theta,r\cos\theta]^T$. Thus,
\begin{align}\nonumber
    \Sigma_{RR,\infty} &= \frac{1}{8\pi^2}\int_0^\pi\int_0^{2\pi}\int_0^{\pi}\boldsymbol{q}\boldsymbol{q}^T\;2(1 - \cos r)\sin\theta\;d\theta d\phi dr\\\label{eq:LimitingCovarianceR}
    &= \frac{\pi^2 + 6}{9}\mathbb{I}.
\end{align}
Figure \ref{fig:SRR_sphere} is a plot comparing the (co)variance obtained from the Ornstein-Uhlenbeck solution in (\ref{eq:CovSol}) along with the corresponding term in the sample covariance. Additionally, the covariance from the non-inertial theory and that for the limiting uniform distribution (\ref{eq:LimitingCovarianceR}) are plotted. 
\begin{figure}[!ht]
\includegraphics[width = 0.45\textwidth]{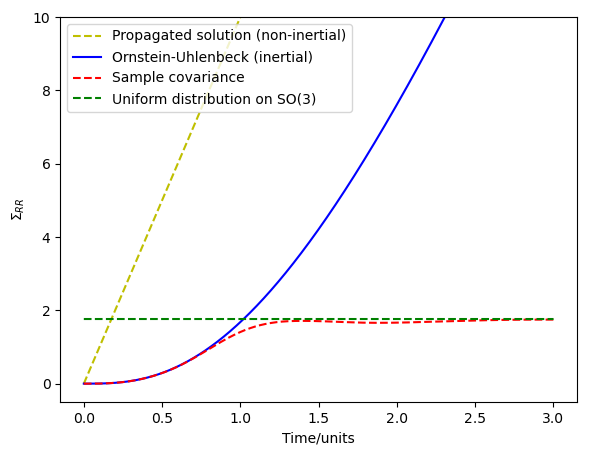}
\caption{\label{fig:SRR_sphere} The evolution of the (co)variance for orientations in the Ornstein-Uhlenbeck solution, compared against that for the sample covariance, along with the corresponding covariance obtained from the non-inertial theory and the uniform distribution asymptote.}
\end{figure}

We note that the covariance from the Ornstein-Uhlenbeck solution (\ref{eq:CovSol}) begins to deviate from the sample covariance after about $t\approx 0.6$ units. Numerically, this is when the covariance becomes large enough that a distribution in the $\mathbb{R}^3$ space spanned by the exponential parameterisation of orientation has sufficiently large volume outside the solid ball, $||\log^\vee g|| < \pi$ for $g \in SO(3)$. The curvature of $SO(3)$, therefore, makes the small covariance assumption that motivated the Ornstein-Uhlenbeck solution invalid. This is true even of the non-inertial theory in Figure \ref{fig:SRR_sphere} and thus, none of the propagation or Ornstein-Uhlenbeck solutions are able to recover the limiting uniform distribution (corresponding to thermal equilibrium).

\paragraph{Covariance for orientation-angular momentum coupling} Figure \ref{fig:SRL_sphere} depicts the time evolution of the coupling covariance, comparing the Ornstein-Uhlenbeck solution with the corresponding sample covariance.
\begin{figure}[!ht]
\includegraphics[width = 0.45\textwidth]{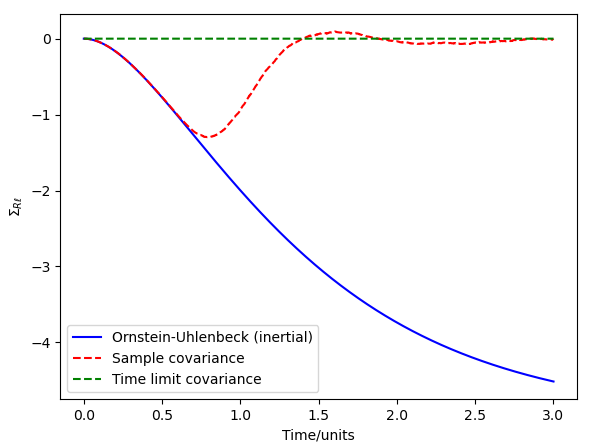}
\caption{\label{fig:SRL_sphere} The evolution of the (co)variance for the orientation-angular momentum coupling for the Ornstein-Uhlenbeck (OU) solution against the sample covariance and the time limit covariance.}
\end{figure}
At large times, the coupling (co)variance goes to zero, suggesting the decoupling between angular momentum and orientation, which is seen from the limiting Maxwell-Boltzmann solution in (\ref{eq:LimitingSol}).
\paragraph{Covariance for angular momentum} In exponential coordinates, $$\log^\vee h = \begin{pmatrix}
\boldsymbol{q}\\
J_l^{-1}(\boldsymbol{q})\boldsymbol{\ell}
\end{pmatrix},$$
where $J_l$ is the left Jacobian for $SO(3)$ for exponential coordinates \cite{chirikjian2011stochastic}. Thus $\Sigma_{\ell\ell}$ is taken to refer to the (co)variance associated with the vector $\boldsymbol{y} = J_l^{-1}(\boldsymbol{q})\boldsymbol{\ell}$ instead. 
Using the Maxwell-Boltzmann distribution and the definition (\ref{eq:GroupCov}), we can obtain that the time limit covariance for $\boldsymbol{y}$ is $\Sigma_{\ell\ell,\infty}=\frac{\pi^2+3}{9\beta} I$.
Figure \ref{fig:SlL_sphere} depicts the time evolution for the covariance associated with angular momentum, comparing the Ornstein-Uhlenbeck solution with the corresponding sample covariance and the time limit covariance.
\begin{figure}[!ht]
\vspace{0cm}
\includegraphics[width = 0.45\textwidth]{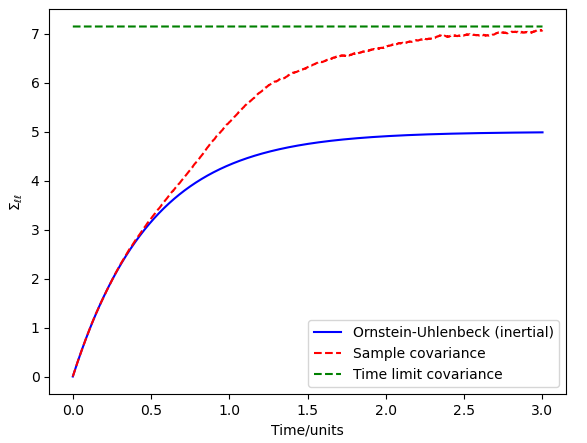}
\caption{\label{fig:SlL_sphere} The evolution of the (co)variance associated with the angular momentum for the Ornstein-Uhlenbeck (OU) solution against the sample covariance and the time limit covariance.}
\vspace{-0.5cm}
\end{figure}

\section{\label{sec:conclusion}Conclusions}

We have provided a short-time closed-form expression of the probability density function on the phase space for an arbitrary-shaped particle undergoing an inertial rotational Brownian motion.
The phase space is described by the cotangent bundle group of $SO(3)$.
An example, where the parameters of the PDF have an analytic expression, is shown and compared to the ground truth and the approximate solution of the non-inertial theory.
For more general cases, it is possible to use a numerical method to obtain the parameters and utilize the simple closed-form PDF to perform further calculations.

One possible future work is to fit a spline to connect the small-time solution with the steady-state Boltzmann distribution.
It requires an additional estimation of the relaxation time.
Another extension is to calculate the covariance matrix without using the approximation (\ref{eq:ori_assump}).
A possible approach is to estimate the time derivative of the covariance matrix (\ref{eq:GroupCov}) using the Fokker-Planck equation on phase space (\ref{eq:FPE_OU_Group_Coord}) and the Lie derivative approximations in Appendix III.

{\bf Acknowledgements}
This work was supported by NUS Startup  grants A-0009059-02-00 and A-0009059-03-00, National Research Foundation, Singapore, under its Medium Sized Centre Programme - Centre for Advanced Robotics Technology Innovation (CARTIN),  sub award A-0009428-08-00, and AME Programmatic Fund Project MARIO A-0008449-01-00.

\vspace{-0.4cm}
\newpage
\vspace{-0.4cm}
\bibliography{References}

\clearpage

\section*{APPENDIX I}\label{sec:App1}

In this section, we give a real-world example where the viscous torque is comparable to the torque introduced by inertia.
The modeling details and numerical results are presented.

Assume many Tobacco Mosaic Virus (TMV) particles are floating and rotating in the air subject to viscous torque and a random torque.
A TMV particle is a long, stick-like particle, whose total length, inner, and outer diameter are $300\,nm$, $4\,nm$, and $18\,nm$ respectively.
It is composed of coat proteins and a single-strand RNA molecule inside, with a total particle weight $3.94\times 10^7\,Da$ \cite{bruckman2014chemical}.
We model a TMV particle as a solid ellipsoid with a major axis $a=150\,nm$, two minor axes $b=c=9\,nm$ and mass $M=3.94\times 10^7 \, Da$.
The viscous torque for a prolate ellipsoid is known \cite{jeffery1915steady},
\begin{equation} \label{vis_ellips}
\boldsymbol{T}^{hyd}=\frac{16}{3}\pi\eta d^3  \boldsymbol{\omega}\bigg / \left[ \frac{1}{2}\ln{\frac{a+d}{a-d}}-\frac{ad}{b^2}\right],
\end{equation}
where $\eta$ is the shear viscosity of the fluid and $d=\sqrt{a^2-b^2}$.
The inertia tensor of an ellipsoid is (when the frame is aligned with the principle axes):
\begin{equation} \label{eq:inertia_ellips}
    I=\frac{M}{5}\begin{pmatrix}
    b^2+c^2 & 0 & 0\\
    0 & a^2+c^2 & 0 \\
    0 & 0 & a^2+b^2
    \end{pmatrix},
\end{equation}
where $M$ is the total mass.

The long-term distribution of the angular velocity of these particles should be a Maxwell-Boltzmann distribution,
\begin{equation} \label{eq:p_omega}
    \tilde{u}_{\infty}(\boldsymbol{\omega}) = \frac{1}{Z}\exp\left(-\frac{1}{2k_B T} \boldsymbol{\omega}^TI \boldsymbol{\omega}\right),
\end{equation}
following the distribution of the angular momentum (\ref{eq:bolt_lim}).
\begin{figure}[!ht]
\includegraphics[width = 0.4\textwidth]{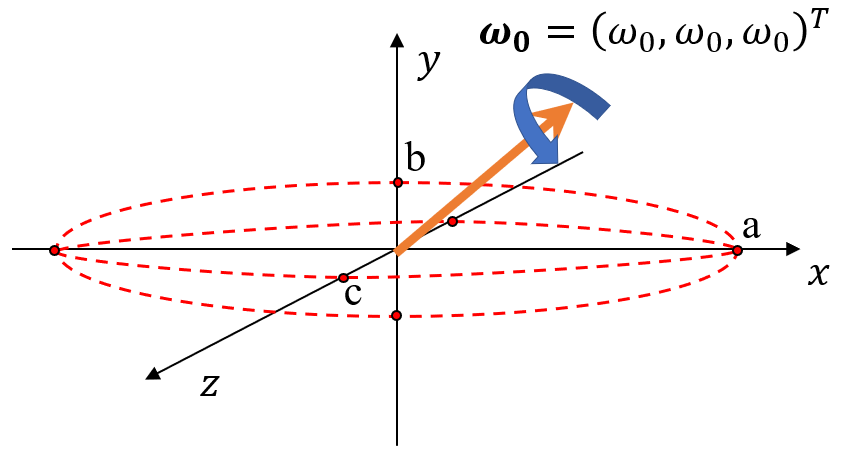}
\caption{\label{fig:omega_illus} The illustration of a prolate ellipsoid rotating at the angular velocity  $\boldsymbol{\omega}_0=({\omega}_0,{\omega}_0,{\omega}_0)^T$.}
\end{figure}
If an angular velocity $\boldsymbol{\omega}_0$ is relatively ``usual'' for the distribution (\ref{eq:p_omega}), the value $\frac{1}{2k_B T} \boldsymbol{\omega}_0^TI \boldsymbol{\omega}_0$ should not be too large.
Assume an angular velocity takes the following form,
\begin{equation} \label{commom_omega}
    \boldsymbol{\omega}_0 = (\omega_0,\omega_0,\omega_0)^T,
\end{equation}
which is illustrated in Figure \ref{fig:omega_illus}.
Assume the temperature $T=300K$. 
For a ``usual'' angular velocity $\boldsymbol{\omega}_0$ that satisfies:
\begin{equation}
    \frac{1}{2k_B T} \boldsymbol{\omega}_0^TI \boldsymbol{\omega}_0
=2,
\end{equation}
we can obtain,
\begin{equation}
    \omega_0=\sqrt{2\cdot 2k_B T / tr(I)}=5.29\times 10^6 rad/s
\end{equation}
using (\ref{eq:inertia_ellips}) and the parameters of a TMV particle.
At this temperature, the shear viscosity of air is $\eta=1.86\times 10^{-5}\, Pa\cdot s$. 
The viscous torque using (\ref{vis_ellips}) is,
\begin{equation}
    C\boldsymbol{\omega}_0=  \begin{pmatrix}
        -2.02\times 10^{-20} \\ -2.02\times 10^{-20} \\ -2.02\times 10^{-20}
    \end{pmatrix} N\cdot m,
\end{equation}
and the torque introduced by inertia is,
\begin{equation}
    \boldsymbol{\omega}_0\times (I\boldsymbol{\omega}_0)=
\begin{pmatrix}
        -7.10\times 10^{-37} \\ -8.19\times 10^{-21} \\ +8.19\times 10^{-21}
    \end{pmatrix} N\cdot m.
\end{equation}
In this case, we see the viscous and inertial torques are of a comparable magnitude.
So when the distribution is approaching the limit distribution, a considerable amount of particles are influenced by both torques, where the inertial theory becomes important.

\section*{APPENDIX II}\label{sec:App2}

In this section, we provide a proof for the relationship (\ref{eq:Fluct_Diss}) in a more general setting, where a deterministic force field is also present.
The proof is written for inertial rotational Brownian motion, while the proof for the non-inertial case follows the same way.
Below, we use the ``$\wedge$'' and ``$\vee$'' operators, the basis of Lie algebra, and Lie derivative $E^R_k$ of $SO(3)$ defined in \cite{chirikjian2011stochastic}.
The Einstein summation convention is also used to simplify equations.

Assume a particle is rotating in a deterministic field $V(R)$ with linear viscous torque and random torque. 
The following stochastic differential equations on the direct product group $SO(3)\times \mathbb{R}^3$ describes the movement of the particle:
\begin{equation}
    \begin{cases}
    d\boldsymbol{\ell}=(\hat{\boldsymbol{\ell}} I^{-1}\boldsymbol{\ell}-E^R V - CI^{-1}\boldsymbol{\ell})dt+Bd\boldsymbol{w}, \\
    (R^T dR)^{\vee}=I^{-1}\boldsymbol{\ell} dt,
    \end{cases}
\end{equation}
where $\boldsymbol{\ell}=I\boldsymbol{\omega}$ is the angular momentum. 
We also rewrite the cross-product term $\boldsymbol{\ell}\times(I^{-1}\boldsymbol{\ell})$ as a matrix multiplication term $\hat{\boldsymbol{\ell}}I^{-1}\boldsymbol{\ell}$ using the property of skew-symmetric matrices. 
The Lie derivative term stands for a vector $E^R V=(E^R_1 V,\, E^R_2 V,\, E^R_3 V)^T$.
Using the methodology in \cite{chirikjian2011stochastic}, the corresponding Fokker-Planck equation on $SO(3)\times \mathbb{R}^3$ is:
\begin{equation}
    \frac{\partial f}{\partial t}=-E^R_k(h_k^Rf)-\frac{\partial}{\partial \ell_k}(h^{\ell}_k f)+\frac{1}{2}(B{B}^T)_{kl}\frac{\partial ^2 f}{\partial \ell_k \partial \ell _l} \label{eq:FPE_direct},
\end{equation}
where
\begin{equation}
\begin{cases}
 \boldsymbol{h}^{\ell} =\hat{\boldsymbol{\ell}}I^{-1}\boldsymbol{\ell}-E^RV-CI^{-1}\boldsymbol{\ell},\\
 \boldsymbol{h}^R =I^{-1}\boldsymbol{\ell}.
\end{cases}
\end{equation}
Here, we use the notation $f(R,\boldsymbol{\ell},t)$ to denote the distribution on the direct product group, so as to differentiate from the distribution $u(h,t)$ on the cotangent bundle in the main text.
We proceed to prove the main theorem.
\begin{theorem}
The solution $f(R,\ell,t)$ to the Fokker-Planck equation (\ref{eq:FPE_direct}) satisfies
\begin{equation} \label{eq:bolt_lim}
    \lim_{t\to \infty} f(R,\boldsymbol{\ell},t)=\frac{1}{Z}\exp \left(-\beta \left( \frac{1}{2}\boldsymbol{\ell}^TI^{-1}\boldsymbol{\ell}+V(R) \right) \right),
\end{equation}
if and only if,
\begin{equation}\label{bbt_cond}
C+C^T=\beta BB^T.
\end{equation}
\end{theorem}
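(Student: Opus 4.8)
The plan is to reduce the ``if and only if'' to a statement about \emph{stationary} solutions of the Fokker--Planck equation (\ref{eq:FPE_direct}): writing
\[
f_\infty(R,\boldsymbol{\ell}) = \frac{1}{Z}\exp\!\left(-\beta\left(\tfrac12\boldsymbol{\ell}^TI^{-1}\boldsymbol{\ell} + V(R)\right)\right),
\]
I will show that the right-hand side of (\ref{eq:FPE_direct}) annihilates $f_\infty$ if and only if $C+C^T=\beta BB^T$. Granting this, the ``if'' direction follows because $f_\infty$ is then a normalizable invariant density of a hypoelliptic diffusion on $SO(3)\times\mathbb{R}^3$ --- the noise enters only the $\boldsymbol{\ell}$ equation, but the drift $\boldsymbol{h}^R=I^{-1}\boldsymbol{\ell}$ couples it into $R$, so Hörmander's condition holds, the invariant density is unique, and $f(R,\boldsymbol{\ell},t)\to f_\infty$. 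The ``only if'' direction follows because, if $f(\cdot,\cdot,t)\to f_\infty$, then $f_\infty$ is a fixed point of the autonomous Fokker--Planck semigroup (apply the semigroup and pass to the limit), hence stationary, which forces the algebraic condition.

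The core of the argument is to substitute $f_\infty$ into (\ref{eq:FPE_direct}) and collect terms, using $\partial f_\infty/\partial\ell_k = -\beta (I^{-1}\boldsymbol{\ell})_k f_\infty$ throughout (and recalling that on the unimodular group $SO(3)$ the operators $E^R_k$ are divergence-free with respect to Haar measure, which is why (\ref{eq:FPE_direct}) has the form shown). Two structural facts make most contributions vanish. First, the gyroscopic term is divergence-free in $\boldsymbol{\ell}$: $\partial_{\ell_k}(\hat{\boldsymbol{\ell}}I^{-1}\boldsymbol{\ell})_k = \varepsilon_{kij}\big(\delta_{ik}(I^{-1}\boldsymbol{\ell})_j + \ell_i (I^{-1})_{jk}\big)=0$, since the first piece carries $\varepsilon_{kkj}$ and the second pairs the symmetric $I^{-1}$ against a $j\leftrightarrow k$ antisymmetric symbol; and it is moreover orthogonal to $\partial f_\infty/\partial\boldsymbol{\ell}$ because $(\boldsymbol{\ell}\times I^{-1}\boldsymbol{\ell})\cdot I^{-1}\boldsymbol{\ell}=0$. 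Hence the entire $\hat{\boldsymbol{\ell}}I^{-1}\boldsymbol{\ell}$ contribution drops. Second, the potential appears twice --- once through $-E^R_k(h^R_k f_\infty)=\beta(I^{-1}\boldsymbol{\ell})_k(E^R_k V)f_\infty$ and once through the $-E^R_kV$ inside $\boldsymbol{h}^\ell$, giving $-\beta(E^R_k V)(I^{-1}\boldsymbol{\ell})_k f_\infty$ --- and these cancel exactly. This ``energy-conservation'' cancellation is what lets an arbitrary $V(R)$ be carried along.

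What remains are only the friction and diffusion terms. Using also $\partial^2 f_\infty/\partial\ell_k\partial\ell_l = \big(-\beta (I^{-1})_{kl} + \beta^2 (I^{-1}\boldsymbol{\ell})_k (I^{-1}\boldsymbol{\ell})_l\big) f_\infty$, and the fact that a quadratic form and a trace only see the symmetric part (so $\boldsymbol{\ell}^TI^{-1}CI^{-1}\boldsymbol{\ell}=\tfrac12\boldsymbol{\ell}^TI^{-1}(C+C^T)I^{-1}\boldsymbol{\ell}$ and $\mathrm{tr}(CI^{-1})=\tfrac12\mathrm{tr}((C+C^T)I^{-1})$), the leftover collapses to
\[
\left(\tfrac12\,\mathrm{tr}\!\big((C+C^T-\beta BB^T)I^{-1}\big) - \tfrac{\beta}{2}\,\boldsymbol{\ell}^TI^{-1}(C+C^T-\beta BB^T)I^{-1}\boldsymbol{\ell}\right) f_\infty .
\]
Requiring this to vanish for all $\boldsymbol{\ell}$: since $C+C^T-\beta BB^T$ is symmetric and $I^{-1}$ is symmetric and invertible, the $\boldsymbol{\ell}$-quadratic part alone forces $I^{-1}(C+C^T-\beta BB^T)I^{-1}=0$, i.e. $C+C^T=\beta BB^T$, after which the trace term is automatically zero; conversely this condition kills both pieces. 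This proves the stationary-solution equivalence.

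The step I expect to be the real obstacle is not the substitution --- a finite, if fiddly, index computation --- but the passage between ``stationary solution'' and ``$t\to\infty$ limit.'' Making ``only if'' rigorous needs that convergence to $f_\infty$ implies $f_\infty$ is stationary, and making ``if'' rigorous needs uniqueness of the invariant density together with an actual convergence theorem for this degenerate (hypoelliptic) diffusion. I would either invoke standard hypoellipticity and ergodicity results for such kinetic Fokker--Planck operators, or, if a self-contained treatment is preferred, state and prove the theorem at the level of ``$f_\infty$ is a stationary solution'' and cite convergence as a separate input.
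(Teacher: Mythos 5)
Your proposal is correct and takes essentially the same route as the paper's Appendix II: substitute $f_\infty$ into the stationary Fokker--Planck equation (\ref{eq:FPE_direct}), observe that the gyroscopic contribution vanishes (divergence-free in $\boldsymbol{\ell}$ and $\boldsymbol{\omega}\cdot(\boldsymbol{\ell}\times\boldsymbol{\omega})=0$) and that the two $V$-terms cancel, then read off from the remaining constant (trace) and $\boldsymbol{\ell}$-quadratic pieces that the symmetric part of $C-\tfrac12\beta BB^T$ must vanish, i.e.\ $C+C^T=\beta BB^T$. Your closing caveat is also accurate: the paper's proof likewise only establishes the equivalence at the level of stationary solutions, treating ``LHS $\to 0$ as $t\to\infty$'' informally, so the hypoellipticity/ergodicity input you identify as needed for the genuine $t\to\infty$ statement is equally left implicit there.
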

\begin{proof}
When the time is very large, the LHS of (\ref{eq:FPE_direct}) should approach zero.
We denote the RHS of (\ref{eq:bolt_lim}) as $f_{\infty}(R,\boldsymbol{\ell})$ and substitute it into the RHS of (\ref{eq:FPE_direct}).
The first term is,
\begin{equation} \label{eq:1term}
    -E_k^R(h_k^Rf_{\infty})=-\boldsymbol{e}_k^TI^{-1}\boldsymbol{\ell}E^R_k f_{\infty},
\end{equation}
where $\boldsymbol{e}_k^TI^{-1}\boldsymbol{\ell}$ is a scalar and the subscript $k$ is summed over.
For the second term, we first calculate ${\partial h_k^{\ell}}/{\partial \ell _k}$,
\begin{equation}
\begin{aligned} \label{eq:h_der}
\frac{\partial h^{\ell}_k}{\partial {\ell}_k}&=\frac{\partial }{\partial {\ell}_k} \left[ \boldsymbol{e}^T_k(\hat{\boldsymbol{\ell}}I^{-1}\boldsymbol{\ell}-E^RV-CI^{-1}\boldsymbol{\ell}) \right] \\
&=\boldsymbol{e}^T_k(\hat {\boldsymbol{e}}_kI^{-1}\boldsymbol{\ell}+\hat{\boldsymbol{\ell}} I^{-1} \boldsymbol{e}_k-CI^{-1} \boldsymbol{e}_k) \\ 
&=\boldsymbol{e}_k^T(\hat{\boldsymbol{\ell}} -C)I^{-1} \boldsymbol{e}_k.  
\end{aligned} 
\vspace{0cm}
\end{equation}
Note that the term $\boldsymbol{e}^T_k \hat {\boldsymbol{e}}_kI^{-1}\boldsymbol{\ell}$ vanishes because $\boldsymbol{e}^T_k \hat {\boldsymbol{e}}_k=-(\boldsymbol{e}_k \times {\boldsymbol{e}}_k)^T$.
The derivatives of $f_{\infty}$ are: 
\begin{equation} \label{eq:f_der}
\begin{cases}
E^R_kf_{\infty}=-\beta (E^R_k V) f_{\infty} \\
\frac{\partial f_{\infty}}{\partial \ell_k}=-\beta \boldsymbol{e}_k^T I^{-1}\boldsymbol{\ell} f_{\infty} \\
\frac{\partial ^2 f_{\infty}}{\partial \ell_k \partial \ell_l}=\left(-\beta \boldsymbol{e}_k^TI^{-1}\boldsymbol{e}_l+\beta^2 \boldsymbol{\ell}^T(I^{-T}\boldsymbol{e}_k \boldsymbol{e}_l^TI^{-1}\right)\boldsymbol{\ell} )f_{\infty} 
\end{cases}.
\end{equation}
The fact that $\boldsymbol{e}_k^TI^{-1}\boldsymbol{\ell}=\boldsymbol{\ell}^TI^{-T}\boldsymbol{e}_k$ is used above.

Substituting (\ref{eq:1term}), (\ref{eq:h_der}), and (\ref{eq:f_der}) into (\ref{eq:FPE_direct}) and removing $f_{\infty}$, we have:
\begin{equation} \label{long}
    \begin{aligned}
    \beta (E^R_k V)\boldsymbol{e}_k^TI^{-1}\boldsymbol{\ell}-\boldsymbol{e}_k^T(\hat {\boldsymbol{\ell}} -C)I^{-1}\boldsymbol{e}_k + \beta (\boldsymbol{e}_k^T I^{-1}\boldsymbol{\ell}) h_k^{\ell} \\ +\frac{1}{2}(BB^T)_{kl}(-\beta \boldsymbol{e}_k^TI^{-1}\boldsymbol{e}_l+\beta^2\boldsymbol{\ell}^T(I^{-T}\boldsymbol{e}_k \boldsymbol{e}_l^TI^{-1})\boldsymbol{\ell})=0  .
    \end{aligned}
\end{equation}
The first three terms can be organized as:
\begin{equation} \label{long_simplify}
    \begin{aligned}
&\beta (E^R_k V)\boldsymbol{e}_k^TI^{-1}\boldsymbol{\ell}-\boldsymbol{e}_k^T(\hat{\boldsymbol{\ell}} -C)I^{-1}\boldsymbol{e}_k \\ & \qquad\qquad + \beta (\boldsymbol{e}_k^T I^{-1}\boldsymbol{\ell}) \left[\boldsymbol{e}_k^T(\hat{\boldsymbol{\ell}} -C)I^{-1}\boldsymbol{\ell}-E^R_kV\right] \\
=& \boldsymbol{e}_k^TCI^{-1}\boldsymbol{e}_k-\boldsymbol{e}_k^T \hat{\boldsymbol{\ell}}  I^{-1}\boldsymbol{e}_k-\beta \boldsymbol{\ell}^TI^{-T}\boldsymbol{e}_k\boldsymbol{e}_k^TCI^{-1}\boldsymbol{\ell} \\
& \qquad \qquad  +\beta\boldsymbol{\ell}^TI^{-T}\boldsymbol{e}_k \boldsymbol{e}_k^T \hat{\boldsymbol{\ell}}  I^{-1}\boldsymbol{\ell} \\
=& \boldsymbol{e}_k^TCI^{-1}\boldsymbol{e}_k-tr( \hat{\boldsymbol{\ell}}  I^{-1})-\beta \boldsymbol{\ell}^TI^{-T}CI^{-1}\boldsymbol{\ell} \\
=& \boldsymbol{e}_k^TCI^{-1}\boldsymbol{e}_k-\beta \boldsymbol{\ell}^TI^{-T}CI^{-1}\boldsymbol{\ell} 
\end{aligned} 
\end{equation}
The equation $\boldsymbol{e}_k^TI^{-1}\boldsymbol{\ell}=\boldsymbol{\ell}^TI^{-T}\boldsymbol{e}_k$ is used in lines 1-2. 
In line 2, the term $\boldsymbol{e}_k \boldsymbol{e}_k^T$ equals the identity matrix when summed over.
So the cubic term is simplified to $(I^{-1}\boldsymbol{\ell})^T\hat{\boldsymbol{\ell}} I^{-1}\boldsymbol{\ell}=\boldsymbol{\omega}\cdot(\boldsymbol{\ell}\times \boldsymbol{\omega})$ and vanishes.
In lines 3-4, we employ the fact that the trace of a skew-symmetric matrix multiplying a symmetric matrix is 0.

For (\ref{long}) to hold true, the constant term and the quadratic term should vanish for all $\boldsymbol{\ell}$. Substituting (\ref{long_simplify}) into (\ref{long}), the constant term is,
\begin{equation} \label{last1}
    \begin{aligned}
& \boldsymbol{e}_k^TCI^{-1}\boldsymbol{e}_k-\frac{1}{2}\beta\cdot(BB^T)_{kl} \boldsymbol{e}_k^TI^{-1}\boldsymbol{e}_l \\
=&tr\left((C-\frac{1}{2}\beta\cdot BB^T)I^{-1}\right),
\end{aligned}
\end{equation}
and the quadratic term is,
\begin{equation}
    \begin{aligned}
& -\beta \boldsymbol{\ell}^TI^{-T}CI^{-1}\boldsymbol{\ell}+\frac{1}{2}(BB^T)_{kl}\beta^2\boldsymbol{\ell}^T(I^{-T}e_k e_l^TI^{-1})\boldsymbol{\ell} \\
=& \boldsymbol{\ell}^T(-\beta I^{-T}CI^{-1} +\frac{1}{2}(BB^T)_{kl}\beta^2(I^{-T}e_k e_l^TI^{-1}))\boldsymbol{\ell} \\
=& \boldsymbol{\ell}^T(-\beta I^{-T}CI^{-1} +\frac{1}{2}\beta^2(I^{-T}BB^TI^{-1}))\boldsymbol{\ell} \\
=& -\beta\cdot \boldsymbol{\ell}^TI^{-T}(C-\frac{1}{2}\beta \cdot BB^T)I^{-1}\boldsymbol{\ell} \\
=& -\beta\cdot \boldsymbol{\omega}^T(C-\frac{1}{2}\beta \cdot BB^T)\boldsymbol{\omega}.
\end{aligned} \label{last2}
\end{equation}
The only condition for both (\ref{last1}) and (\ref{last2}) to be zero is that $C-\frac{1}{2}\beta \cdot BB^T$ is a skew-symmetric matrix:
\begin{equation}
    (C-\frac{1}{2}\beta \cdot BB^T)=-(C-\frac{1}{2}\beta \cdot BB^T)^T.
\end{equation}
Since $BB^T$ is symmetric, the condition is equivalent to,
\begin{equation}
    C+C^T=\beta \cdot BB^T,
\end{equation}
which ends the proof.
\end{proof}

\section*{APPENDIX III}\label{sec:App3}
In this section, we provide some useful approximation formulas for the Lie derivative of the logarithm of a group element, \textit{i}.\textit{e}., 
\begin{equation}
    \begin{aligned}
   &E^L_i \boldsymbol{x}, \,E^L_i(\boldsymbol{x}\boldsymbol{x}^T),\, E_i^LE_j^L \boldsymbol{x},\, E_i^L E_j^L(\boldsymbol{x}\boldsymbol{x}^T),\\
    &E^R_i \boldsymbol{x}, \,E^R_i(\boldsymbol{x}\boldsymbol{x}^T), \,E_i^RE_j^R \boldsymbol{x},\, E_i^R E_j^R(\boldsymbol{x}\boldsymbol{x}^T).
    \end{aligned}
\end{equation}
where $\boldsymbol{x}(k)\!=\![\log^\vee k]$ is a vector-valued function of $k\!\in \!G$.

The BCH expansion for $Z = \log(e^Xe^Y)$ is,
\begin{multline}
    Z = X + Y + \frac{1}{2}[X,Y] + \frac{1}{12}([X,[X,Y]] + [Y,[Y,X]]) +\\+ \frac{1}{48}([Y,[X,[Y,X]]] + [X,[Y,[Y,X]]]) + \cdots,
\end{multline}
where $[X,Y] = XY - YX$ and the subsequent terms involve at least five appearances of $X$ or $Y$ in the recursive Lie brackets (i.e., for instance $[Y,[Y,[Y,[Y,X]]]]$).

We first see that,
\begin{multline}\label{eq:BCH_left}
    [\log^\vee(e^{-tE_i}\circ k)] - [\log^\vee k] = -t\boldsymbol{e}_i - \frac{t}{2}ad_i\boldsymbol{x} -\\- \frac{t}{12}ad_Xad_X\boldsymbol{e}_i + \mathcal{O}(t,||\boldsymbol{x}||^4),
\end{multline}
and,
\begin{multline}\label{eq:BCH_right}
    [\log^\vee(k \circ e^{tE_i})] - [\log^\vee k] = t\boldsymbol{e}_i - \frac{t}{2}ad_i\boldsymbol{x} +\\+ \frac{t}{12}ad_Xad_X\boldsymbol{e}_i + \mathcal{O}(t,||\boldsymbol{x}||^4).
\end{multline}

From here, we can see that,
\begin{align*}
    E^L_i\boldsymbol{x} &= \lim_{t\rightarrow 0}\left(\frac{[\log^\vee (e^{-tE_i}\circ k)] - [\log^\vee k]}{t}\right) \\&\approx -\boldsymbol{e}_i - \frac{1}{2}ad_i\boldsymbol{x} - \frac{1}{12}ad_Xad_X\boldsymbol{e}_i + \mathcal{O}(||\boldsymbol{x}||^4),
\end{align*}
and,
\begin{align*}
    E^R_i\boldsymbol{x} &= \lim_{t\rightarrow 0}\left(\frac{[\log^\vee (k\circ e^{tE_i})] - [\log^\vee k]}{t}\right) \\&\approx \boldsymbol{e}_i - \frac{1}{2}ad_i\boldsymbol{x} + \frac{1}{12}ad_Xad_X\boldsymbol{e}_i + \mathcal{O}(||\boldsymbol{x}||^4).
\end{align*}

Therefore, with an error of order $\mathcal{O}(||\boldsymbol{x}||^3)$ we obtain,
\begin{align}\nonumber
    E^L_i\left(\boldsymbol{x}\boldsymbol{x}^T\right) &\approx -\boldsymbol{x}\left(\boldsymbol{e}_i + \frac{1}{2}ad_i\boldsymbol{x}\right)^T - \left(\boldsymbol{e}_i + \frac{1}{2}ad_i\boldsymbol{x}\right)\boldsymbol{x}^T\\\label{eq:ELloglog}
    &\approx -\boldsymbol{x}\boldsymbol{e}^T_i - \boldsymbol{e}_i\boldsymbol{x}^T - \frac{1}{2}\boldsymbol{x}\boldsymbol{x}^Tad_i^T - \frac{1}{2}ad_i\boldsymbol{x}\boldsymbol{x}^T.
\end{align}
Similarly we have,
\begin{equation}\label{eq:ERloglog}
    E^R_i(\boldsymbol{x}\boldsymbol{x}^T) = \boldsymbol{x}\boldsymbol{e}^T_i + \boldsymbol{e}_i\boldsymbol{x}^T - \frac{1}{2}\boldsymbol{x}\boldsymbol{x}^Tad_i^T - \frac{1}{2}ad_i\boldsymbol{x}\boldsymbol{x}^T.
\end{equation}

A term that we will appear is, $E^L_iE^L_j\boldsymbol{x}$, which we can expand with an error of $\mathcal{O}(||\boldsymbol{x}||^3)$ to obtain,
\begin{multline}\nonumber
    E^L_iE^L_j\boldsymbol{x} \approx \frac{1}{2}ad_j\left(\boldsymbol{e}_i + \frac{1}{2}ad_i\boldsymbol{x} + \frac{1}{12}ad_Xad_X\boldsymbol{e}_i\right) -\\-  \frac{1}{12}E^L_i(ad_Xad_X)\boldsymbol{e}_j
\end{multline}
Since $ad_X$ is linear in $\boldsymbol{x}$, to ensure that  $E^L_iE^L_j\boldsymbol{x}$ has an error of $\mathcal{O}(||\boldsymbol{x}||^3)$, we can truncate the expansion of $E^L_i(ad_X)$ with an error of $\mathcal{O}(||\boldsymbol{x}||^2)$. Similarly,
\begin{multline}\label{eq:ERiERjlog}
    E^R_iE^R_j\boldsymbol{x} \approx \frac{1}{2}ad_j\left(\boldsymbol{e}_i - \frac{1}{2}ad_i\boldsymbol{x} + \frac{1}{12}ad_Xad_X\boldsymbol{e}_i\right) +\\+ \frac{1}{12}E^R_i(ad_Xad_X)\boldsymbol{e}_j
\end{multline}

The following is true by the product rule:
\begin{multline*}
E^L_iE^L_j(\boldsymbol{x}\boldsymbol{x}^T) = (E^L_iE^L_j\boldsymbol{x})\boldsymbol{x}^T + E^L_j\boldsymbol{x}E^L_i\boldsymbol{x}^T +\\+ E^L_i\boldsymbol{x}E^L_j\boldsymbol{x}^T + \boldsymbol{x}(E^L_iE^L_j\boldsymbol{x}^T),
\end{multline*}
where the $E^L_i\boldsymbol{x}$ term has to be expanded with an error of $\mathcal{O}(||\boldsymbol{x}||^3)$ but $E^L_iE^L_j\boldsymbol{x}$ only needs to have an error of $\mathcal{O}(||\boldsymbol{x}||^2)$ in the expressions in the paper (since it is usually multiplied with another quantity of order $\mathcal{O}(||\boldsymbol{x}||)$). Likewise,
\begin{multline*}
E^R_iE^R_j(\boldsymbol{x}\boldsymbol{x}^T) = (E^R_iE^R_j\boldsymbol{x})\boldsymbol{x}^T + E^R_j\boldsymbol{x}E^R_i\boldsymbol{x}^T +\\+ E^R_i\boldsymbol{x}E^R_j\boldsymbol{x}^T + \boldsymbol{x}(E^R_iE^R_j\boldsymbol{x}^T),
\end{multline*}

The previous two expressions rely on knowing $E^L_iad_X$ and $E^R_iad_X$ with an error of at most $\mathcal{O}(||\boldsymbol{x}||^2)$. We first note that,
\begin{equation}
    ad(c_1X + c_2Y) = c_1ad(X) + c_2ad(Y),
\end{equation}
by linearity for two constants $c_1$ and $c_2$. This can also be seen by assuming an arbitrary $Z$ such that $ad(c_1X + c_2Y)Z = [c_1X+c_2Y,Z] = c_1[X,Z] + c_2[Y,Z] = c_1ad(X)Z + c_2ad(Y)Z = (c_1ad(X) + c_2ad(Y))Z$. Since this holds for any $Z$, the above identity follows. Then,
\begin{align}\nonumber
    E^L_i(ad_X) &= \lim_{t\rightarrow 0}\left(\frac{[ad(\log(e^{-tE_i}\circ k))] - [ad(\log k)]}{t}\right) \\&\approx 
    -ad_i - \frac{1}{2}ad_{[E_i,X]},
\end{align}
and,
\begin{align}\nonumber
    E^R_i(ad_X) &= \lim_{t\rightarrow 0}\left(\frac{[ad(\log(k\circ e^{tE_i}))] - [ad(\log k)]}{t}\right) \\&\approx 
    ad_i - \frac{1}{2}ad_{[E_i,X]}.
\end{align}
Using these results we have, to an error of $\mathcal{O}(||\boldsymbol{x}||^3)$,
\begin{multline}\label{eq:ELiEljlog}
    E^L_iE^L_j\boldsymbol{x} \approx \frac{1}{2}ad_j\boldsymbol{e}_i + \frac{1}{4}ad_jad_i\boldsymbol{x} +\\+ \frac{1}{24}ad_jad_Xad_X\boldsymbol{e}_i + \frac{1}{12}\left(ad_iad_X\boldsymbol{e}_j + ad_Xad_i\boldsymbol{e}_j\right)+\\+ \frac{1}{24}\left(ad_{[E_i,X]}ad_X\boldsymbol{e}_j + ad_Xad_{[E_i,X]}\boldsymbol{e}_j\right),
\end{multline}
and if we define $P$ and $S$ as the following collection of terms:
\begin{multline*}
    P = \frac{1}{2}ad_j\boldsymbol{e}_i\boldsymbol{x}^T + \frac{1}{4}ad_jad_i\boldsymbol{x}\boldsymbol{x}^T +\\+\frac{1}{12}\left(ad_iad_X\boldsymbol{e}_j\boldsymbol{x}^T + ad_Xad_i\boldsymbol{e}_j\boldsymbol{x}^T\right),
\end{multline*}
\begin{multline*}
    S = \boldsymbol{e}_i\boldsymbol{e}_j^T + \frac{1}{2}\left(\boldsymbol{e}_i\boldsymbol{x}^Tad_j^T + ad_i\boldsymbol{x}\boldsymbol{e}_j^T\right) +\\+ \frac{1}{4}ad_i\boldsymbol{x}\boldsymbol{x}^Tad_j^T + \frac{1}{12}\left(\boldsymbol{e}_i\boldsymbol{e}_j^Tad_X^Tad_X^T + ad_Xad_X\boldsymbol{e}_i\boldsymbol{e}^T_j\right),
\end{multline*}
we would have,
\begin{equation}\label{eq:ELiEljloglog}
    E^L_iE^L_j(\boldsymbol{x}\boldsymbol{x}^T) \approx P + P^T + S + S^T.
\end{equation}

\end{document}